\newtheorem{theorem}{Theorem}
\newtheorem{definition}[theorem]{Definition}
\setlist[itemize]{leftmargin=*}
\pgfplotsset{compat=1.18}
\algnewcommand\algorithmicinput{\textbf{Preprocessing:}}
\algnewcommand\Preprocessing{\item[\algorithmicinput]}
\algnewcommand\algorithmicequire{\textbf{Note:}}
\algnewcommand\Note{\item[\algorithmicequire]}
\algrenewcommand\algorithmicrequire{\textbf{Input:}}
\begin{document}
\title{Prediction of microstructural representativity from a single image}

\author[1]{{Amir Dahari}
 \ }

\author[1, 2]{{Ronan Docherty}
 \ }

\author[1]{{Steve Kench}
 \ }

\author[1]{Samuel J. Cooper \ \ }

\affil[1]{{\textit{\footnotesize Dyson School of Design Engineering, Imperial College London, London SW7 2DB}}}
\affil[2]{{\textit{\footnotesize Department of Materials, Imperial College London, London SW7 2AZ}}}

\maketitle

% Thoughts:

% - can we nail down the interpretation of the model (i.e bernoulli stuff) - would make it much neater to explain 
% - really nail down wording and presentation of 'true' CLS vs statistical CLS vs our CLS - adjust figure 1 with reference to that
% - it can be difficult to follow the distinction between predicted phase fraction (and the uncertainty in that) 
% Figure 4 needs more explanation - the distinction between the std of the volume fraction 

% Stylistics (in order of importance):
% - reduce figure captions, placing the explanation in the main text more 
% - clear up legend clutter in some figures i.e fig 3, fig 4 - makes them difficult to read
% - I prefer equations to be in the text, i.e if  they end a sentence theey should have full stop after them, or a comma if they're a clause.

%\doublespacing
\begin{abstract}
\begin{center}
\begin{minipage}{0.85\textwidth}
{\small In this study, we present a method for predicting the representativity of the phase fraction observed in a single image (2D or 3D) of a material. Traditional approaches often require large datasets and extensive statistical analysis to estimate the Integral Range, a key factor in determining the variance of microstructural properties. Our method leverages the Two-Point Correlation function to directly estimate the variance from a single image, thereby enabling phase fraction prediction with associated confidence levels. We validate our approach using open-source datasets, demonstrating its efficacy across diverse microstructures. This technique significantly reduces the data requirements for representativity analysis, providing a practical tool for material scientists and engineers working with limited microstructural data. To make the method easily accessible, we have created a web-application, \url{www.imagerep.io}, for quick, simple and informative use of the method.}
\end{minipage}
\end{center}
\end{abstract}
\vspace{.2cm}
%\begin{multicols}{2}
%\keywords{Microstructure \and Electrode \and GANs \and Reconstruction \and Machine Learning}
\newpage

\tableofcontents

\newpage

\section{Introduction}

Microscopy is an important aspect of materials characterisation, allowing the statistical analysis and modelling of material microstructure. Through techniques like scanning electron microscopy (SEM) and X-ray computed tomography (XCT), researchers can obtain detailed images of the microstructure, which form the basis for understanding the material’s properties and behaviour. This approach is critical to the design of a wide range of technologies, including efficient battery electrodes, corrosion resistant alloys, and bio-compatible bone implants. However, a fundamental challenge in using microscopy for material characterization is assessing the representativity of the data extracted from a single image.

In many materials even slight variations in phase fraction can have a significant impact on the material’s overall properties, but representativity analysis of this metric is rare in the literature. This is perhaps because traditional methods of assessing representativity rely on large datasets, potentially requiring multiple image acquisitions, which are often resource-intensive. Furthermore, the computational expense of modelling techniques scales with the number of pixels/voxels; thus, it may sometimes be undesirable to discover that your technique requires an order of magnitude more data to give reasonable confidence in the results. To this end, we often choose to simply bury our heads in the sand, and report results with no thought given to sample representativity.

In this paper, we propose a simple model, ImageRep, for estimating confidence in phase fraction from a single micrograph or microstructure (2D or 3D). While most of our model is analytical, we use the MicroLib dataset \cite{kench2022microlib} to support one specific, data-driven component: estimating uncertainty in the variance prediction of the phase fraction. Each Microlib entry includes a trained generative adversarial network that enables synthesis of microstructure samples of any size. We leverage this capability to construct a model that directly relates the two-point correlation function (TPC) of an image to the confidence level in representativity of the phase fraction observed in that image. 

Crucially, the fast approximation using the TPC can be calculated on a single micrograph or microstructure, giving users easy access to an estimate of phase fraction confidence. This also allows prediction of the image size required to obtain a user specified uncertainty in phase fraction, which can direct precise data collection campaigns after a small initial data collection study. 

The connection between phase fraction variation and the TPC function was originally proposed by Matheron in \cite{metheron1971theory} and later presented and discussed in \cite{kanit2003determination, matheron2012estimating, ohser2009spectral, chiu2013sec64, jeulin2021sec355}. Across this literature, the equation connecting the TPC function and the phase fraction variation (formulated with different notations but fundamentally the same) is presented as a theoretical result, as it requires the true, yet unknown, bulk material phase fraction as an input. Consequently, it has not been used to directly predict phase fraction variation from limited data. In this paper, we show how this equation can be modified for practical use, without requiring knowledge of the true material phase fraction. Importantly, we prove that, in expectation, our model's prediction exactly matches the phase fraction variation. 

Thus, our contribution is twofold: first, we improve the theoretical understanding of the connection between the TPC function and phase fraction variation as presented in the existing literature; and second, we leverage this insight to develop a practical method for predicting phase fraction variation and for better assessing image representativity. 

Although we prove that our variance prediction is correct in expectation, a prediction of phase fraction variation based on a single image, representing only a small sample of the material, is still subject to uncertainty. This is because the TPC function itself depends on the representativity of the image. Specifically, there is variability in the predicted variation itself, which can be understood as the variance of the phase fraction variance. We approximate this discrepancy using a data-driven approach based on the diverse set of materials available in MicroLib \cite{kench2022microlib}, and we incorporate it into our final representativity prediction, as described in Section \ref{sec:predicting the confidence level}. 

We then validate our model using three open-source battery materials datasets \cite{hsu2018mesoscale, lagadec2016communication} as well as a rich library of synthetic microstructures generated with the PoreSpy microstructure generators \cite{gostick2019porespy}. We show that this last data-driven correction has a small but positive effect. Finally, to make our ImageRep model easily accessible to microscopists and materials scientists, we have created a web application at \url{www.imagerep.io} for quick, simple and informative use of the model. 

\begin{figure}[!htb]
    
    \includegraphics[width=1\textwidth]{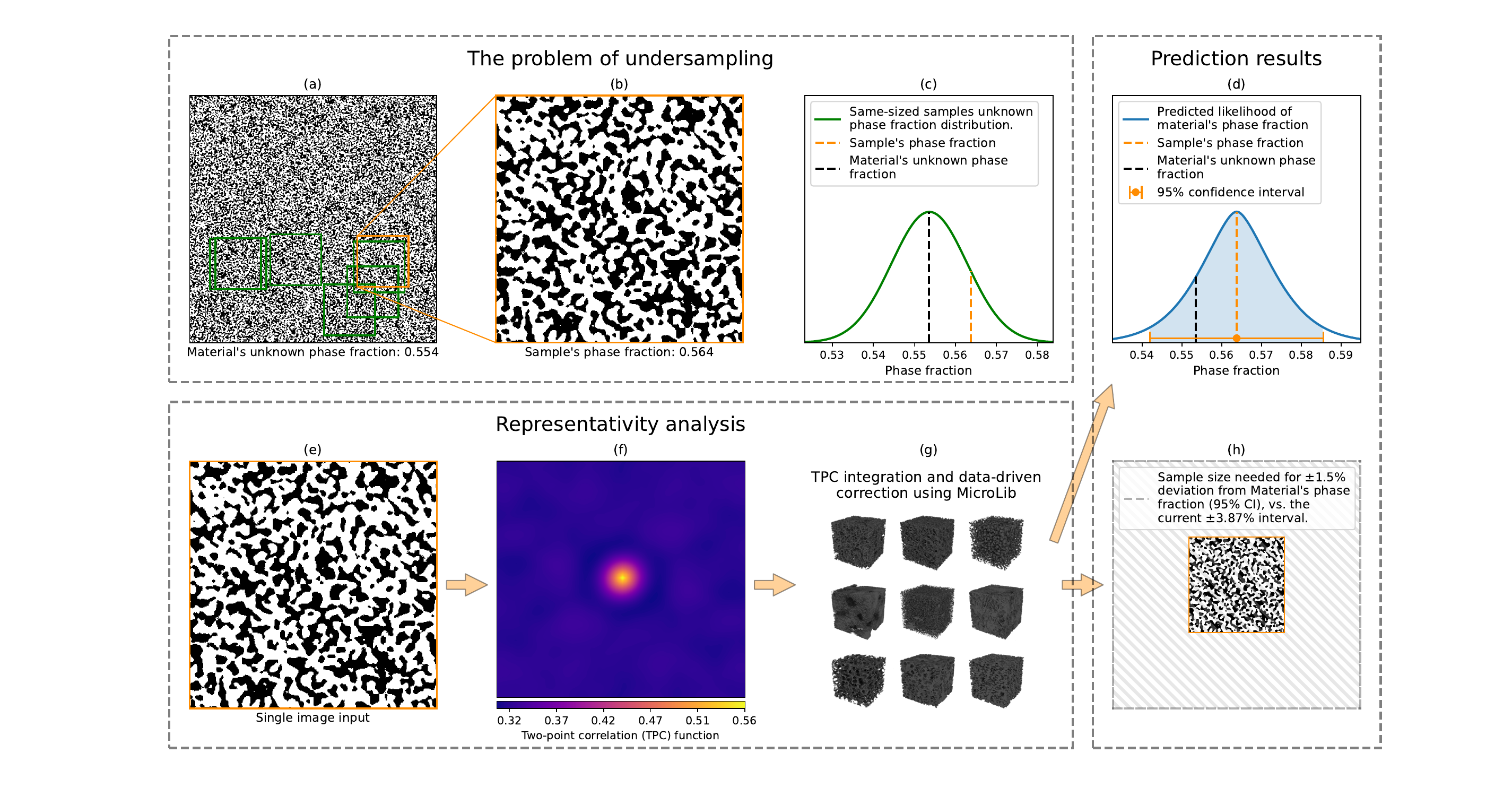}
    \centering
    \caption{An outline of the undersampling problem and the solution presented in this paper. Starting from (a) a large ($1697\times1697$ pixels) image of a SOFC anode \cite{hsu2018mesoscale}, we assume that a (b) small sample ($350\times350$ pixels) is imaged. The phase fraction of the large image is simplified to that of the bulk material\protect\footnotemark, while only the sample phase fraction is observed. Since the difference between sample and bulk phase fractions is unknown, confidence bounds are needed to assess how reliably the sample represents the material. (c) shows the fitted distribution for the histogram of phase fractions of all $350\times350$ pixels image samples taken from (a) (which is also unknown to the practitioner), together with the observed sample phase fraction highlighted. The representativity analysis input is a single image (e) (the same sample shown in (b)). The TPC function of (e) is calculated and its shortest $40\times40$ distances are presented in (f). By utilizing both the TPC and a data-driven analysis of MicroLib \cite{kench2022microlib} in a statistical framework presented in Section \ref{sec:model theory} (g), both the confidence bounds to the material's phase fraction (d) and the image size needed for a specified deviation from the material's phase fraction (h) are determined. 
    }
    \label{fig:introduction}
\end{figure}

The theoretical framework behind our model is developed in Sections \ref{sec: Section 2: representativity prediction} and \ref{sec:model theory}. These sections are technical in nature, so readers focused on empirical validation and implications may prefer to continue to the Validation or Discussion sections \ref{sec: Validation} and \ref{sec: Discussion}.

\section{Representativity Prediction}\label{sec: Section 2: representativity prediction}

\footnotetext{For simplicity, the phase fraction of Figure \ref{fig:introduction}(a) is regarded as the material's phase fraction, although the image has its own uncertainty, since it is still a small sample from the material. According to our method, the image phase fraction deviates by at most 0.81\%  from the true material's phase fraction, with 95\% confidence.\label{footnote:uncertainty in sofc large image}}
\subsection{Representativity}

For a truly heterogeneous material (i.e. non-uniform and non-periodic) a finite sample can never be considered to be \textit{fully} representative, but simply within an acceptable tolerance of likelihood of representing the material \cite{drugan1996micromechanics}. To put it another way, the true value of any property of a material cannot be directly measured, but samples of increasing size are more likely to be closer to this value. This concept applies equally to data of arbitrary dimension, but is most commonly applied to volumes (3D data) and images (2D data) in materials science.  

Representativity is also property dependant, so a sample could be more representative in one metric (e.g. phase fraction) than another (e.g. interfacial surface area). Furthermore, representativity can be process dependent, in the sense that emergent structures of a process, for example turbulent flow in porous media, might require even larger volumes to be analysed, but this aspect is beyond the scope of the present investigation.

The following definition takes into account both the stochasticity and property dependency necessary to understand the representativity of a sample.  

\begin{definition}\label{def: abstract definition}
    A sample is $(c, d)$-m representative if the measured metric, m, in the sample deviates by no more than $d\%$ from the bulk material property, with at least $c\%$ confidence.
\end{definition}

In this study we will start by analysing perhaps the simplest example of representativity, which is the phase fraction of a two-phase material in a 2D image. As we will show, for a two-phase material, the absolute error in the phase fraction is the same for both phases, so the result does not depend on which phase is selected for analysis. Furthermore, following the derivation, we will also show that this approach can be trivially generalised to 3D volumes and $n$-phase materials. 

\begin{definition}\label{def: concrete definition}
    An image of a two-phase material is $(c, d)$-phase fraction representative if the measured phase fraction in the image deviates by no more than $d\%$ from the bulk material property, with at least $c\%$ confidence.
    \end{definition}

A practical example from materials science is an SEM image of a dewetting metal thin film on the surface of a ceramic substrate, where the authors wished to measure the area coverage fraction (i.e. phase fraction) of the metal, as described in \cite{song2019unveiling}. Phase fraction representativity is not discussed in the paper, but they may have wanted to know with what confidence they could expect to be within $1\%$ of the true value of the bulk material. 

\subsection{Model derivation}\label{sec: model derivation}

To formalize the setting of Definition \ref{def: concrete definition}, we describe the image domain size 
$X$ as a discrete rectangular coordinate system of size $l_1\times l_2$, where $l_1$ and $l_2$ are the numbers of pixels along each axis. Each pixel $x\in X$ is identified by its coordinate $x=(x_1,x_2)$. Let $\Omega=\Omega_X$ denote the space of all possible images $\omega\in\Omega$ of the material of size $X$.
     
Let $\Phi=\Phi_X:\Omega\rightarrow[0,1]$ be a random variable that calculates the phase fraction of an image. $\Phi(\omega)$ is simply the phase fraction of phase 1 in the image $\omega$ (N.B. if the image was represented with 1s in the pixels containing phase 1 and 0s in the pixels of phase 2, then $\Phi(\omega)$ would simply be the mean of $\omega$'s pixel values). Let the unknown true phase fraction of this material be $\text{E}[\Phi]=\phi$. 

The variation in the phase fraction of a microstructure is strongly related to the size and variability of its features. It turns out that, for materials that have features of finite sizes, as the size of the image from the material grows large, the variance can be explained by one parameter coined the ``integral range", $A_n$ \cite{metheron1971theory, dirrenberger2014towards} ($n=2,3$ for the dimension of the image):

\begin{equation}\label{eq:variation as a finction of cls}
    \text{Var}[\Phi]=\frac{A_n}{|X|}\cdot\phi(1-\phi)
\end{equation}

when $|X|\gg A_n$. This equation highlights that for phase fraction variation prediction, the image can be thought of as $\frac{|X|}{A_n}$ independent random Bernoulli variables with probability $\phi$ to be $1$ \cite{kanit2003determination}. To compare between image sizes and dimensions, this finding has motivated us to work with the length scale of $A_n$ instead, and define the Characteristic Length Scale (CLS), $a_n$, to be

\begin{equation}\label{eq:Var as a function of CLS}
    a_n=A_n^{1/n},\ \  \text{Var}[\Phi]=\frac{a_n^n}{|X|}\cdot\phi(1-\phi)
\end{equation}

The CLS can be viewed as a measure of the spatial features in a material, capturing the typical distance over which correlations between different regions of the microstructure decay. Unfortunately, even in the simple case of random circular particles on a plane, the CLS is neither the radius of the particles nor the inter-particle distance, but a combined quantification of the relationship between all features.

Until now, other than rule-of-thumb estimations \cite{awarke2011quantifying, awarke20123d}, in all papers that quantify representativity, $a_n$ is approximated in the manner visualised in the middle column of Figure \ref{fig:representativity introduction} \cite{kanit2003determination, cadiou2019numerical, el2021numerical}. First, images of increasing size from the material are randomly selected - in our experiment we randomly selected $4000$ samples per image size. Second, the variance or standard deviation of the phase fraction is calculated for each size of the randomly selected samples. Lastly, after obtaining different standard deviation data points for increasing sample sizes, the best integral range is found that fits Equation (\ref{eq:Var as a function of CLS}) \cite{kanit2006apparent}. 

Since $\Phi$ behaves as a mean of independent random Bernoulli variables for large enough $|X|$, it follows from the central limit theorem that we can assume $$\Phi\sim \mathcal{N}(\phi,\sigma^2_X)$$ as shown in \cite{quintanilla1997local}. The normality of $\Phi$ can be used for a confidence interval, to indicate that for a random image $\omega\in\Omega$, with $c\%$ confidence, 

\begin{equation*}
    \Phi(\omega)\in [\phi-\sigma_X\cdot z_{c\%}, \phi+\sigma_X\cdot z_{c\%}]
\end{equation*}

for the appropriate $z$-value at $c\%$ (for example $z_{95\%}\approx 1.96)$. Conversely, since the mean value, $\phi$, is unknown, and only $\Phi(\omega)$ and $\sigma_X$ are known, then with $95\%$ confidence,

\begin{equation}\label{eq:inherent error}
    \phi\in [\Phi(\omega)-\sigma_X\cdot z_{c\%}, \Phi(\omega)+\sigma_X\cdot z_{c\%}]
\end{equation}

For this evaluation of the phase fraction $\phi$, the deviation of $\Phi(\omega)$ from $\phi$ is at most $\sigma_X\cdot z_{c\%}$ with $c\%$ confidence. Therefore, aligning with Definition \ref{def: concrete definition}, if the standard deviation $\sigma_X$ is known, the representativity of an image $\omega$ of size $X$ would be known as well, since the image would be $(c, \frac{\sigma_X\cdot z_{c\%}}{\Phi(\omega)}\cdot100)$-phase fraction representative. Therefore, knowing the representativity of an image is equivalent to knowing the variation of the phase fraction for images of the same size, randomly sampled from the bulk.

\begin{figure}
\centering
\includegraphics[width=0.8\textwidth]{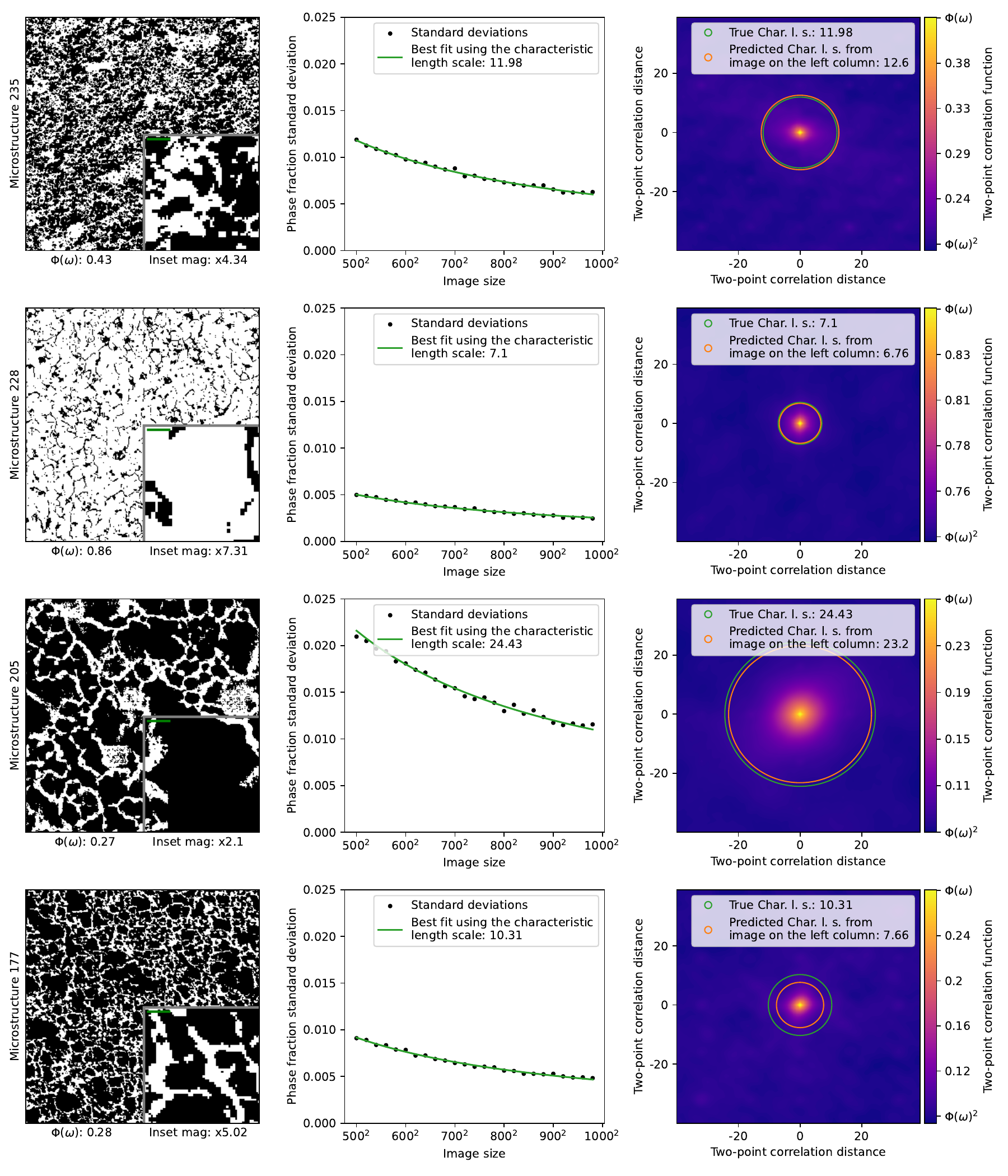}
\caption{Comparison of true and predicted characteristic length scale (CLS), across several MicroLib \cite{kench2022microlib} microstructures. Column 1 shows the microstructure from the MicroLib database, with the size of the characteristic length scale showing as the green line inside the inset. Column 2 shows the trend in standard deviation for increasing sample size of images generated by SliceGAN (4000 images per image size), and a fitted curve calculated by searching for an appropriate CLS using Equation \ref{eq:Var as a function of CLS}. Column 3 shows the two-point correlation (TPC) function of the image, magnified to the $80\times80$ shortest distances, and a comparison of the CLS calculated from the TPC of the image in the left column (presented in this section), versus the aforementioned statistical analysis fitting of thousands of images in the middle column.}
\label{fig:representativity introduction}
\end{figure}

To see the relationship between image size, CLS and the phase fraction standard deviation, see the middle column of Figure \ref{fig:representativity introduction}  and the excellent curve fit of the standard deviations obtained experimentally and from the theoretical variance using the CLS value presented in Equation \ref{eq:Var as a function of CLS}. It is remarkable, after observing similar CLS fits for all of the different microstructures in MicroLib \cite{kench2022microlib}, that one single number $a_n$ embodies the entire feature variability needed to define the microstructure phase fraction variability \cite{matheron2012estimating}. In other words, $a_n$ is the key to defining phase-fraction variance, and is the only variable needed to predict phase fraction representativity. 

As we will show later, applying this approach to a single image by sampling subimages of that image and fitting the CLS, leads to inaccurate predictions. With a single image, a more information-rich method must be used to ensure accurate results.

\subsection{Model overview}

In this subsection, before diving into the theory behind the model, we present a concise overview of the model and its two uses, which will be covered at length in the next section. After obtaining an image of the material, there are two ways for a user to use the model. One way is to predict the image phase fraction deviation from the bulk material phase fraction, with a given confidence level. The other way is to produce a prediction for the image size that would be required in order for the phase fraction to be expected to fall within a particular interval with a specified confidence level. This is the same as stating how much more data needs to be collected to reach a specified degree of representativity. These two uses are presented in the two algorithms below.

\begin{algorithm*}[htb!] 

\begin{algorithmic}[1]
\Require The image $\omega$ (of size $|X|$) and the desired confidence level $c\%$.
\Preprocessing $\sigma_{mod,|X|}\leftarrow$ Obtain the pre-calculated model prediction error for images of size $|X|$, based on the MicroLib dataset, as explained in Figure \ref{fig:model error}.
\Statex
    \State $TPC(\omega)\leftarrow$ Calculate the TPC of the image $\omega\in\Omega_X$. 
    \State $\Tilde{\sigma}\leftarrow$ Estimate the std from the unknown distribution $\Omega_X$ using $TPC(\omega)$, by Equation (\ref{eq:predicting std}). 
    \State Calculate the confidence bounds using $\Tilde{\sigma},c\%,\sigma_{mod,|X|}$  explained in Section \ref{subsec:Total error estimation}.
    \label{State: 3 in algorithm 1}
    
    \Return $d\%$ the deviation of $\Phi(\omega)$ from the bulk material phase fraction, with $c\%$ confidence. 
    \caption{{\footnotesize Prediction of phase fraction representativity of a given image }}
    \label{alg: repres. prediction}
\end{algorithmic}
\end{algorithm*}

\begin{algorithm*}[htb!] 

\begin{algorithmic}[1]
\Require The image $\omega$ (of size $|X|$), desired confidence level $c\%$ and the desired deviation level from the bulk material phase fraction $d\%$.
\Preprocessing $\forall|Y|:\sigma_{mod,|Y|}\leftarrow$ Obtain the pre-calculated model prediction errors fit for different sized images based on the MicroLib dataset, as explained in Figure \ref{fig:model error}.
\Statex
    \State $TPC(\omega)\leftarrow$ Calculate the TPC of the image $\omega\in\Omega_X$. 
    \State $\Tilde{a}_n\leftarrow$ Estimate the CLS  from the unknown distribution $\Omega_X$ using $TPC(\omega)$, by Equation (\ref{eq:model CLS prediction}). 
    \State Using Algorithm 1 (translating $\Tilde{a}_n$ to $\Tilde{\sigma}$ in state \ref{State: 3 in algorithm 1}) , find $|X'|$ for which the returned value is $d\%$. 
    
    \Return $|X'|$ the image size predicted to be $(c,d)$-phase fraction representative, according to Definition \ref{def: concrete definition}.

		\caption{{\footnotesize Prediction of image size needed for a given representativity certainty}}
		\label{alg:image size prediction}
	\end{algorithmic}
\end{algorithm*}

\section{Model Theory}\label{sec:model theory}

\subsection{Variation Prediction}

As discussed in the previous section, knowing the representativity of a random image is equivalent to knowing the sample size-dependent variance of the measured property. This part will be focused on the derivation of the model's prediction of the variance of the phase fraction, and how it has the desired property that in expectation it is precisely the variance of the phase fraction.

Starting off with the same notation as the previous section, if for every location $x\in X$ we set $B_x$ to be a random variable that is 1 if in the image $\omega$ there is phase 1 in position $x$, then the phase fraction of image $\omega$ is simply

\begin{equation*}
    \Phi(\omega)=\frac{1}{|X|}\sum_{x\in X}B_x(\omega)
\end{equation*}

For a given vector $r\in\mathbb{Z}^2$ and image size $X$, the Two-Point Correlation (TPC) function (or Auto-correlation function) $T_r=T_{r,X}:\ \Omega\rightarrow[0,1]$, calculates the mean number of vectors $r$ with both end-points being phase 1 in the image, i.e.

\begin{equation*}
    T_r(\omega)=\frac{1}{|X_r|}\sum_{\substack{{x\in X} \\ {x+r\in X}}} B_x(\omega)\cdot B_{x+r}(\omega)
\end{equation*}
 
with $|X_r|$ defined to be the volume of all points $x\in X$ such that $x+r\in X$ as well. We use the Fast Fourier Transform (FFT) algorithm for fast computation of the TPC as theoretically explained in \cite{adams2012microstructure} and applied in \cite{robertson2022efficient}, reducing the computation time of the TPC of large images from tens of minutes to a few seconds. The connection between the FFT and the TPC is given in the Supplementary Information. 

Before presenting our results, we present the assumption made about the materials examined within the scope of this study. We assume that the materials are homogeneous, meaning their statistical properties, such as the distribution of phases, are the same throughout the material. 

However, homogeneity alone does not imply decay of spatial correlations. Therefore, we introduce a stronger assumption, referred to as the macro-homogeneity assumption, as in prior work \cite{hill1984macroscopic, costanzo2005definitions}. The assumption states that correlations decay with distance, so that regions sufficiently far apart can be treated as independent. I.e. for large vectors $r$, the probability of finding a specific phase in both end-points of the vector converges to the square of the phase fraction of the material, since the endpoints become independent of each other:  
    
\begin{equation*}
    T_r\xrightarrow[|r|\rightarrow \infty]{} \text{E}[\Phi]^2
\end{equation*}

So given a small $\epsilon>0$, there exists a large vector length $r_0$, such that for all $r:\ |r|>r_0$, $|T_r-\text{E}[\Phi]^2|<\epsilon$.

In our model, we assume that for each material, there exists a certain distance above which the values are independent:

\begin{definition}\label{eq:r bigger than r_0}
(Macro-homogeneity assumption) A material is macro-homogeneous if there exists a sufficiently large distance $r_0$ such that $\forall r:\ |r|>r_0:$ $$\normalfont\text{E}[T_r]=\text{E}[\Phi]^2$$ 
\end{definition}

This assumption might simplify the underlying spatial structure of the material, and its implications and limitations are further discussed in the Discussion section.

Using these preliminaries, we can present the result of using the TPC for the variation prediction. Given a single image, there is no way to know the phase fraction of the bulk material, $\text{E}[\Phi]$, but a good estimate of the variance $\text{Var}[\Phi]$ for a specific image size can predict how far the image phase fraction is likely to be from the material phase fraction. Harnessing the TPC in the variance prediction of a single image leads to the following result. We write $\sum_r$ to denote the sum over all vectors $r=x-x'$, where $x,x'\in X$ are coordinates in the image domain.
 
\begin{theorem}\label{theorem: expectation is variance}
    For large enough $r_0$ and appropriate constant $C_{r_0}$, if we set $\Psi:\Omega\rightarrow\mathbb{R}$ to be
    $$\Psi(\omega)=\frac{1}{C_{r_0}}\cdot\sum_{\substack{r \\ |r|\leq r_0}}\frac{|X_r|}{|X|}\left(T_r(\omega)-\Phi^2(\omega)\right)$$ then \normalfont$\text{E}[\Psi]=\text{Var}[\Phi]$.
\end{theorem}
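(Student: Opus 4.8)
The plan is to compute $\text{Var}[\Phi]$ and $\text{E}[\Psi]$ separately in terms of the quantities $\text{E}[T_r]$ and $|X_r|$, and then to read off the constant $C_{r_0}$ that forces them to coincide. The natural starting point is the identity obtained by squaring the definition of the phase fraction, $\Phi^2(\omega)=\frac{1}{|X|^2}\sum_{x,x'\in X}B_x(\omega)B_{x'}(\omega)$. Taking expectations and regrouping the double sum by the displacement vector $r=x'-x$, each fixed $r$ contributes exactly $|X_r|$ ordered pairs $(x,x+r)$ with both endpoints in $X$, so that $\sum_{x:\,x,x+r\in X}\text{E}[B_xB_{x+r}]=|X_r|\,\text{E}[T_r]$ directly from the definition of $T_r$. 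This yields the exact, assumption-free relation $\text{E}[\Phi^2]=\frac{1}{|X|^2}\sum_r |X_r|\,\text{E}[T_r]$, where the sum runs over all displacements.

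Next I would rewrite the variance. Since $\sum_r|X_r|=|X|^2$ (this simply counts all ordered pairs of pixels in $X$), the constant $\phi^2=\text{E}[\Phi]^2$ can be pulled inside the sum, giving $\text{Var}[\Phi]=\text{E}[\Phi^2]-\phi^2=\frac{1}{|X|^2}\sum_r|X_r|\bigl(\text{E}[T_r]-\phi^2\bigr)$. This is where the macro-homogeneity assumption (Definition \ref{eq:r bigger than r_0}) does its work: every term with $|r|>r_0$ vanishes because $\text{E}[T_r]=\phi^2$ there, collapsing the sum to the finite expression $\text{Var}[\Phi]=\frac{1}{|X|^2}\sum_{|r|\le r_0}|X_r|\bigl(\text{E}[T_r]-\phi^2\bigr)$.

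It then remains to compute $\text{E}[\Psi]$ and match. By linearity of expectation, $\text{E}[\Psi]=\frac{1}{C_{r_0}|X|}\sum_{|r|\le r_0}|X_r|\bigl(\text{E}[T_r]-\text{E}[\Phi^2]\bigr)$. The crucial and most delicate feature is that $\Psi$ is built from the \emph{observable} quantity $\Phi^2(\omega)$ rather than from the unknown constant $\phi^2$ — this is exactly the practical modification promised in the introduction — so its expectation introduces $\text{E}[\Phi^2]=\text{Var}[\Phi]+\phi^2$ in place of $\phi^2$. Substituting this identity splits the sum into the truncated variance expression from the previous paragraph plus a correction proportional to $\text{Var}[\Phi]$ itself, and after using $\sum_{|r|\le r_0}|X_r|(\text{E}[T_r]-\phi^2)=|X|^2\,\text{Var}[\Phi]$ the whole thing collapses to $\text{E}[\Psi]=\frac{\text{Var}[\Phi]}{C_{r_0}|X|}\bigl(|X|^2-\sum_{|r|\le r_0}|X_r|\bigr)$.

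The main obstacle is precisely this self-referential appearance of $\text{Var}[\Phi]$ on the right-hand side, produced by the replacement of $\phi^2$ with $\Phi^2(\omega)$; it is resolved not by any estimate but by reading off the normalising constant that makes the prefactor equal to one, namely $C_{r_0}=\frac{1}{|X|}\bigl(|X|^2-\sum_{|r|\le r_0}|X_r|\bigr)=\frac{1}{|X|}\sum_{|r|>r_0}|X_r|$. With this choice $\text{E}[\Psi]=\text{Var}[\Phi]$ holds exactly. I would close by noting that $C_{r_0}$ is well-defined and strictly positive precisely when $r_0$ is large enough for macro-homogeneity to hold yet small enough relative to the image that some displacements exceed it, which is what is meant by ``large enough $r_0$ and appropriate constant $C_{r_0}$'' in the statement.
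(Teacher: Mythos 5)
Your proposal is correct and takes essentially the same route as the paper's proof: both rest on the expansion $\text{E}[\Phi^2]=\frac{1}{|X|^2}\sum_r|X_r|\,\text{E}[T_r]$, the counting identity $\sum_r|X_r|=|X|^2$, the macro-homogeneity assumption to eliminate or rewrite the $|r|>r_0$ terms, and linearity of expectation, arriving at the identical constant $C_{r_0}=\frac{1}{|X|}\sum_{|r|>r_0}|X_r|$. The only difference is organizational: the paper fixes $C_{r_0}$ via a direct rearrangement of one identity, whereas you pass through the classical $\phi^2$-based variance formula, leave the constant free, and solve for the value that cancels the self-referential $\text{Var}[\Phi]$ term introduced by replacing $\phi^2$ with the observable $\Phi^2(\omega)$.
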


\begin{proof}
    
First, we focus on $\Phi^2(\omega)$:
\begin{equation*}
    \Phi^2(\omega)=\frac{1}{|X|}\sum_{x\in X}B_x(\omega)\cdot\frac{1}{|X|}\sum_{x'\in X}B_{x'}(\omega)=
\end{equation*}
 
\begin{equation*}
    =\frac{1}{|X|^2}\sum_{x\in X}\sum_{x'\in X}B_x(\omega)\cdot B_{x'}(\omega)=
\end{equation*}
 
\begin{equation*}
    =\frac{1}{|X|^2}\sum_r\sum_{\substack{{x\in X} \\ {x+r\in X}}}B_x(\omega)\cdot B_{x+r}(\omega)=
\end{equation*}
Where the last equality is easily seen by double inclusion.
\begin{equation*}
    =\frac{1}{|X|}\sum_r\frac{1}{|X|}\sum_{\substack{{x\in X} \\ {x+r\in X}}}B_x(\omega)\cdot B_{x+r}(\omega)=
\end{equation*}
 
\begin{equation*}
    =\frac{1}{|X|}\sum_r\frac{|X_r|}{|X|}\frac{1}{|X_r|}\sum_{\substack{{x\in X} \\ {x+r\in X}}}B_x(\omega)\cdot B_{x+r}(\omega)=\frac{1}{|X|}\sum_r\frac{|X_r|}{|X|}T_r(\omega)
\end{equation*}

So for an image $\omega$,

\begin{equation*}\label{eq:equality of tpc and observed phase fraction squared}
    \Phi^2(\omega)=\frac{1}{|X|}\sum_r\frac{|X_r|}{|X|}T_r(\omega)
\end{equation*}
 
And from the linearity of the expectation we obtain
\begin{equation}\label{eq: expectation of Phi squared}
    \text{E}[\Phi^2]=\frac{1}{|X|}\sum_r\frac{|X_r|}{|X|}\text{E}[T_r]
\end{equation}

% We understand that this assumption has a small error we are not quantifying.
 
Additionally, it's a short exercise to see that 
\begin{equation*}
    \frac{1}{|X|}\sum_r\frac{|X_r|}{|X|}=1
\end{equation*}
 
And so 

\begin{equation}\label{eq: multiplication by 1}
    \text{E}[\Phi^2]=\text{E}[\Phi^2]\cdot\frac{1}{ |X|}\sum_r\frac{|X_r|}{ |X|}=\frac{1}{ |X|}\sum_r\frac{|X_r|}{ |X|}\text{E}[\Phi^2]
\end{equation}

    Combining the macro-homogeneity assumption \ref{eq:r bigger than r_0} and Equations (\ref{eq: expectation of Phi squared}) and (\ref{eq: multiplication by 1}), we obtain
    \begin{equation*}
    \frac{1}{ |X|}\sum_r\frac{|X_r|}{ |X|} \text{E}[\Phi^2]=\text{E}[\Phi^2]=
    \frac{1}{ |X|}\sum_r\frac{|X_r|}{ |X|}\text{E}[T_r]=
\end{equation*}
\begin{equation*}
    =\frac{1}{ |X|}\left(\sum_{\substack{r \\ |r|\leq r_0}}\frac{|X_r|}{ |X|}\text{E}[T_r]+\sum_{\substack{r \\ |r|> r_0}}\frac{|X_r|}{ |X|}\text{E}[T_r]\right)=
\end{equation*}
\begin{equation*}
    =\frac{1}{|X|}\left(\sum_{\substack{r \\|r|\leq r_0}}\frac{|X_r|}{|X|}\text{E}[T_r]+\sum_{\substack{r \\|r|> r_0}}\frac{|X_r|}{|X|}\text{E}[\Phi]^2\right)
\end{equation*}
Multiplying by $|X|$ and subtracting elements from both sides, we reach
\begin{equation}\label{eq:before dividing by normalisation}
    \sum_{\substack{r \\ |r|> r_0}}\frac{|X_r|}{|X|}\left(\text{E}[\Phi^2]-\text{E}[\Phi]^2\right)=\sum_{\substack{r \\ |r|\leq r_0}}\frac{|X_r|}{|X|}\left(\text{E}[T_r]-\text{E}[\Phi^2]\right)
\end{equation}

    If we define the constant $C_{r_0}$ to be
\begin{equation*}
    C_{r_0}=\sum_{\substack{r \\ |r|> r_0}}\frac{|X_r|}{|X|}
\end{equation*}

then since $\text{Var}[\Phi]=\text{E}[\Phi^2]-\text{E}[\Phi]^2$, from Equation (\ref{eq:before dividing by normalisation}) we reach the result
\begin{equation}\label{eq: Variance after taking the expectation}
    \text{Var}[\Phi]=\frac{1}{C_{r_0}}\cdot\sum_{\substack{r \\ |r|\leq r_0}}\frac{|X_r|}{|X|}\left(\text{E}[T_r]-\text{E}[\Phi^2]\right)
\end{equation}

So, if we set the new random variable $\Psi:\Omega\rightarrow\mathbb{R}$ to be

\begin{equation}\label{eq:Psi definition}
    \Psi(\omega)=\frac{1}{C_{r_0}}\cdot\sum_{\substack{r \\ |r|\leq r_0}}\frac{|X_r|}{|X|}\left(T_r(\omega)-\Phi^2(\omega)\right)
\end{equation}

Note that $\Psi$ only uses available attributes of the image. Applying the linearity of the expectation to Equation (\ref{eq:Psi definition}) yields Equation (\ref{eq: Variance after taking the expectation}), and we reach the desired result:

\begin{equation*}
    \text{E}[\Psi]=\text{Var}[\Phi]
\end{equation*}
\end{proof}

Ultimately, we can use $\Psi$ for our model variance prediction. A visual explanation of the proof of Theorem \ref{theorem: expectation is variance} in the simple case of randomly distributed circles on a plane can be seen in Figure \ref{fig:circles explanation}. 

The equations relating the two-point correlation (TPC) function to phase fraction variation, as presented in previous literature \cite{metheron1971theory, kanit2003determination, matheron2012estimating, ohser2009spectral, chiu2013sec64, jeulin2021sec355}, are inherently theoretical due to their reliance on the true but unknown bulk material phase fraction. Among these studies, only \cite{ohser2009spectral} explicitly provides the correct normalization, yet even this formulation depends on the unknown bulk phase fraction. In these equations, replacing the true bulk phase fraction with the image-derived phase fraction would incorrectly yield zero variation, as explicitly shown in equation \ref{eq: expectation of Phi squared}. Consequently, our formulation is the first to enable the prediction of phase fraction variation directly from the TPC function.

The selection of $r_0$ in practice is also presented in the supplementary information. In essence, we measure increasingly large rings from the center of the TPC function and select $r_0$ at the point where the TPC function plateaus and stabilizes. This plateau reflects the transition from correlated to uncorrelated spatial behavior, offering a practical and interpretable criterion. While this approach has proven effective in our analyses, we acknowledge that the choice of $r_0$ remains an important parameter and may benefit from further refinement in future work, which is further explored in the discussion section.

To support the $r_0$ selection and enable an efficient computation via the FFT algorithm \cite{adams2012microstructure}, we use periodic TPC calculation instead of non-periodic TPC.  These adjustments, along with minor modifications to Equation (\ref{eq:Psi definition}) that improve prediction stability, are described in detail in the Supplementary Information, including a discussion of how they preserve the result of Theorem \ref{theorem: expectation is variance}. While non-periodic TPC can be computed by padding the image with zeros and appropriately normalizing, this approach is both slower and more memory-intensive. In contrast, the periodic TPC introduces uncorrelated TPC information, especially at the boundaries, which has a stabilizing effect on the TPC curve. This added stability facilitates the identification of the plateau region and thus supports a more robust selection of $r_0$.

We have presented the case of 2D for simplicity, but all of the calculations also hold for 3D. The only requirement for the extension to 3D is to change the coordinate system $X$ at the beginning of Section \ref{sec: model derivation} to $X = X_{l_1}\times X_{l_2}\times X_{l_3}$. In addition, the only modification required to analyse a material with more than two phases is to treat each phase separately, where pixels of the phase in question are set to one and all other pixels are set to zero.

\begin{figure}
    \centering
    \includegraphics[width=\textwidth]{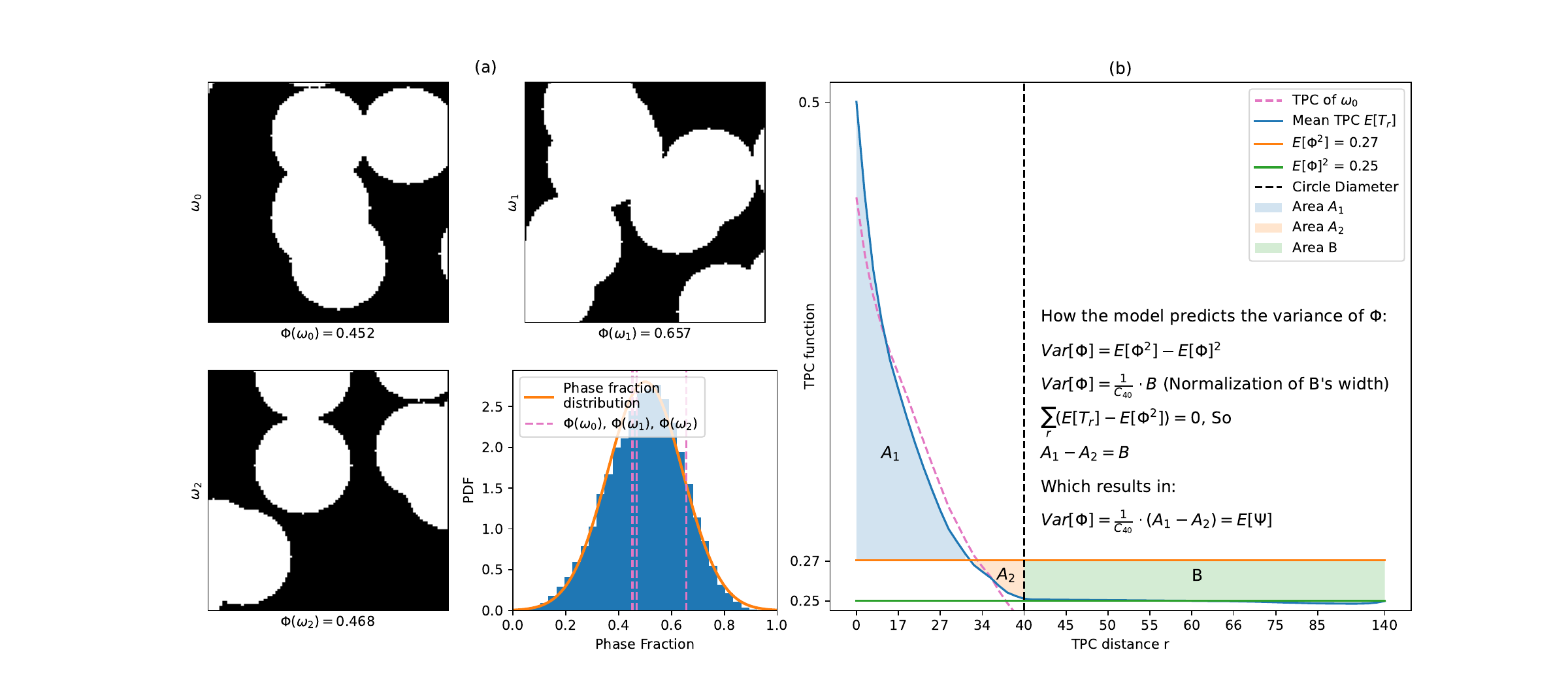}
    \caption{A visual explanation of the proof of Theorem \ref{theorem: expectation is variance}, in the simple case of randomly placed circles on a plane (overlapping permitted). On the left (a), there are three randomly generated $100^2$ images of randomly placed circles with diameter $40$. The histogram in the bottom right of (a) shows the measured phase fraction of $1000$ images that are generated in this way, and that in expectation the phase fraction will be $\text{E}[\Phi]=0.5$. The right plot (b) shows the TPC of image $\omega_0$ in dashed pink, and the expectation of the TPC over the $1000$ image samples, $\text{E}[T_r]$, for different radii lengths in blue. The entire $\omega_0$ TPC function is not presented since after $r=37$, the TPC is lower than 0.24. Because the circles are independently placed on the plane, after $r_0=40$, the probability of seeing the same phase at both ends of a vector is $\text{E}[\Phi]^2$, as shown in the plot. The longest distance of a vector in a $100^2$ image is $100\cdot\sqrt{2}$, hence the end distance of $140$. The variance of phase fraction for this simple case for image size $100^2$ is $0.02$ and standard deviation $0.141$.} 
    \label{fig:circles explanation}
\end{figure}

\subsubsection{Prediction of the Characteristic Length Scale}

For the characteristic length scale $a_n$ (the length scale of the Integral Range $A_n=a_n^n$) presented in Equation (\ref{eq:Var as a function of CLS}), if solved for $a_n$ we get

\begin{equation}\label{eq:CLS as a function of variance}
    a_n=\left(\frac{|X|\cdot \text{Var}[\Phi]}{\phi(1-\phi)}\right)^{\frac{1}{n}}
\end{equation}

Combining Equations (\ref{eq:Psi definition}) and (\ref{eq:CLS as a function of variance}), we can define a new random variable $\Tilde{a}_{n,X}:\Omega_X\rightarrow\mathbb{R}$ that is the resulting $a_n$ prediction for a single image:

\begin{equation}\label{eq:model CLS prediction}
    \Tilde{a}_{n,X}(\omega)=\left(\frac{|X|\cdot\Psi_X(\omega)}{\Phi(\omega)(1-\Phi(\omega))}\right)^{\frac{1}{n}}
\end{equation}

\subsection{Data-driven Understanding of the model error using the MicroLib library}

The MicroLib microstructure library \cite{kench2022microlib} is a diverse library of microstructures based on the DoITPoMS micrograph library \cite{barber2007doitpoms}. The dimensionality expansion from 2D to 3D in MicroLib was created using SliceGAN \cite{kench2021generating}. SliceGAN is trained on a single 2D micrograph, and can then generate arbitrarily large volumes of the microstructure, derived from its fully (transpose) convolutional architecture. This ability allows us to study the representativeness of these microstructures as we can generate as many and as large samples as we need for a statistical analysis.

The MicroLib dataset consists of 87 diverse entries. In general, while grayscale micrographs may contain more detailed features, segmented n-phase images are commonly required for quantitative analysis. This includes the majority of modelling approaches, as well as extraction of metrics such as phase fraction, surface area, feature morphology and tortuosity. For this reason, we constrain our datasets to the 77 segmented 2-phase materials. 

To validate the model, we wish to understand how well our model approximates the image phase fraction size-dependent variance of the MicroLib materials. To do this, instead of predicting the variance, we predict the standard deviation (std) \begin{equation}\label{eq:predicting std}
\Tilde{\sigma}_{n,X}(\omega):=\sqrt{\Psi(\omega)}=\sqrt{\frac{1}{C_{r_0}}\cdot\sum_{\substack{r \\ |r|\leq r_0}}\frac{|X_r|}{|X|}\left(T_r(\omega)-\Phi^2(\omega)\right)}
\end{equation} 

 of the phase fraction $\sigma_n=\sqrt{\text{Var}[\Phi]}$. We do this because the standard deviation directly relates to the standard error used in the confidence interval for the phase fraction presented in Equation (\ref{eq:inherent error}). However, 
unlike the proof of Theorem \ref{theorem: expectation is variance}, where we have used the linearity of the expectation, it is not initially clear how close the expected value of our prediction $\text{E}[\Tilde{\sigma}]$ is to the standard deviation $\sigma[\Phi]$. Predicting the standard deviation does not produce a clean result as in Theorem \ref{theorem: expectation is variance}, since $$\sigma[\Phi]=\sqrt{\text{Var}[\Phi]}=\sqrt{\text{E}[\Psi]}\neq \text{E}[\sqrt{\Psi}]=\text{E}[\Tilde{\sigma}]$$
But, using a known result of the Taylor expansion of $\sqrt{\Psi}$ up to second order around the expected value $\text{E}[\Psi]$ we get $$\sigma[\Phi]=\sqrt{\text{Var}[\Phi]}=\sqrt{\text{E}[\Psi]}\approx \text{E}[\sqrt{\Psi}] +\frac{\text{Var}[\Psi]}{8\cdot \text{E}[\Psi]^{1.5}}\approx \text{E}[\sqrt{\Psi}]=\text{E}[\Tilde{\sigma}]$$
where the second equality follows from Theorem \ref{theorem: expectation is variance} and the second approximation follows from the fact that the variance of $\Psi$ is very small compared to the expectation of $\Psi$. So $\text{E}[\Tilde{\sigma}]\approx\sigma[\Psi]$ and assures us to use this approximation.

\subsection{Model uncertainty}

\begin{figure}[!h]
    \centering
    \includegraphics[width=\textwidth]{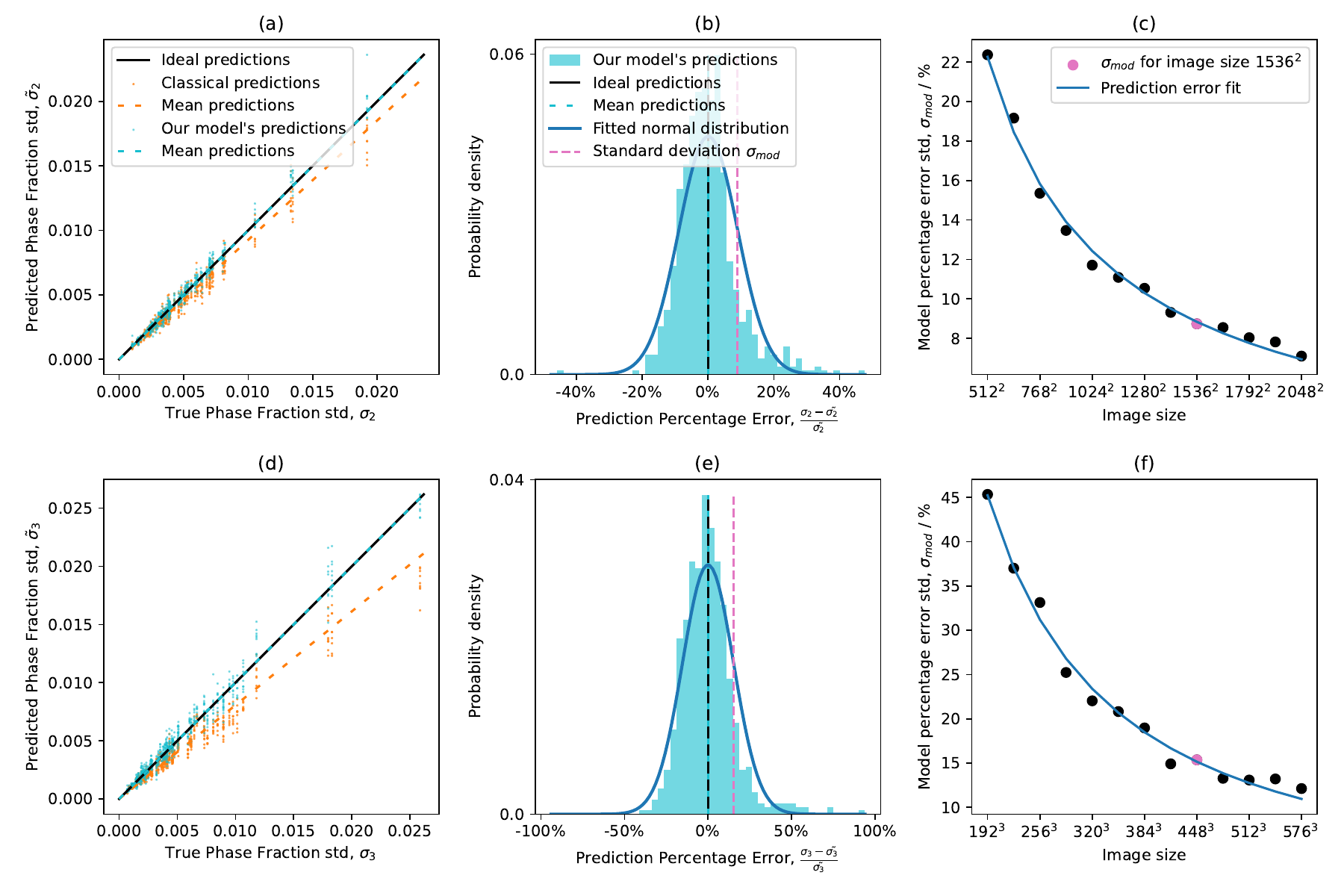}
    \caption{Determining our model's standard deviation prediction error. The top row refers to 2D images and the bottom to 3D images. Left is the prediction of the standard deviation $\Tilde{\sigma}_n$ for a specific image size ($1536^2$ for 2D in (a) and $448^3$ for 3D in (d)), and the real standard deviation $\sigma_n$ for the 77 different microstructures in MicroLib. There are 10 predictions for every microstructure for 10 different randomly generated images, hence the vertical stripes in the scatter plot, totalling in 770 $(\sigma_n,\Tilde{\sigma}_n)$ pairs. For every image, the prediction from our model is shown in teal, while the prediction from the classical subdivision method \cite{kanit2003determination} is shown in orange. The middle plots (b) \& (e) show the prediction percentage error for the plot on the left, including a fitted normal distribution. Using this fit, the standard deviation of this normal distribution, $\sigma_\text{mod}$, is found. The plots on the right, (c) \& (f), show the decay in model error (specifically, in the standard deviation of these fitted normal distributions) as the image size grows larger, with the reference for the image sizes presented in the left and middle plots is highlighted in pink. These error fits on the right are used for the final representativity predictions presented in Section \ref{subsec:Total error estimation}.}
    
    \label{fig:model error}
\end{figure}

To understand the accuracy of our model in predicting $\sigma_n$, the variation of $\Tilde{\sigma}_{n}$ is needed. The size and diversity of the MicroLib \cite{kench2022microlib} dataset allow us to explore this relationship and reach statistically significant conclusions. Figure \ref{fig:model error} shows the data-driven quantification of the model error in predicting the standard deviation in phase fraction $\Tilde{\sigma}_n$.

Specifically, we calculate $\Tilde{\sigma}_{n}$ based on a single image using Equation (\ref{eq:predicting std}) and calculate $\sigma_n$ using a statistical analysis of thousands of images generated with SliceGAN (referred to as the `true' standard deviation). Based on the predicted phase fraction standard deviation $\Tilde{\sigma}_{n}$, the method described above gives standard deviation predictions that correlate well with the real standard deviations. However, as expected, there is some spread, as shown in Figure \ref{fig:model error}.  This additional model error must be incorporated into the final prediction to avoid overconfident predictions. For this error quantification, the percentage error of the prediction from the true value was calculated for every image using the equation $\text{PE}(\Tilde{\sigma}_n, \sigma_n)=\frac{\sigma_n-\Tilde{\sigma}_n}{\Tilde{\sigma}_n}$.  

The predictions of our model are compared to the predictions of the `classical' method of finding the standard deviation of a single image through fitting the characteristic length scale as shown in the middle column of Figure \ref{fig:representativity introduction}, with its adaptation for a single image use as explained in the supplementary information. In contrast to Figure \ref{fig:representativity introduction}, where thousands of large images are sampled for finding the standard deviation of every image size, using this method on a single image (using subdivisions to smaller images) leads to significant under-predictions of the variance as can be seen in the left column of Figure \ref{fig:model error}. In comparison, our model's prediction percentage error mean in absolute value, is under $0.1\%$ for all image sizes. This disparity is further explored in the Results section.

The error model quantification 
process ends by finding the prediction error fit as shown in plots (c) and (f) in Figure \ref{fig:model error}. For each image size, a normal distribution is fitted to the prediction error histogram, and the standard deviation for this error distribution is found, $\sigma_\text{mod}$. The decay in standard deviation for different image sizes is then found, to be further used in the final prediction as explained in section \ref{subsec:Total error estimation}.

\subsection{Final representativity prediction}\label{subsec:Total error estimation}

\subsubsection{Predicting the confidence interval}\label{sec:predicting the confidence level}

\begin{figure}[!h]
    \centering
    \includegraphics[width=1\linewidth]{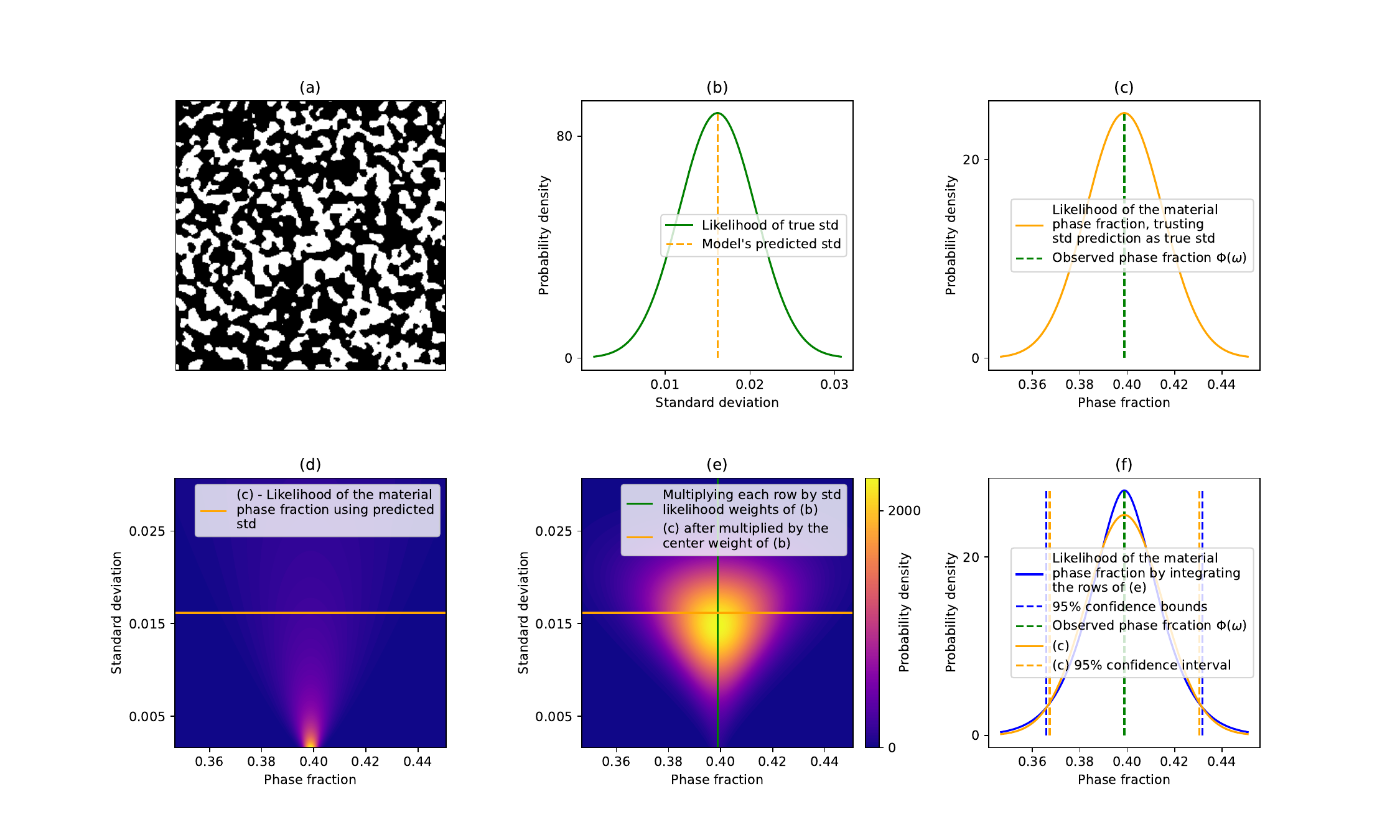}
    \caption{The final model confidence interval prediction, accounting for the error in standard deviation prediction. (a) An example of a $400^2$ image generated by a PoreSpy \cite{gostick2019porespy} generator, with phase fraction $\Phi(\omega)=0.398$. Our model's prediction of the material phase fraction standard deviation (of images of that size) is $\tilde{\sigma}=0.016$, shown by the orange dotted line in (b). The green line in (b) shows the model-predicted likelihood of the true standard deviation, given by $f_{\sigma} \sim \mathcal{N}(\tilde{\sigma}, \sigma_{\text{mod}}\cdot\tilde{\sigma})$. For image size $400^2$, the model estimates $\sigma_{\text{mod}}=27.9\%$ by observing Figure \ref{fig:model error}(c), so $f_{\sigma} \sim \mathcal{N}(0.016, 0.0045) $. (c) shows the likelihood of the bulk phase fraction assuming the predicted standard deviation is exact (and assuming normality of the phase fraction distribution)
, without accounting for the data-driven uncertainty \( \sigma_{\text{mod}} \). Each row of (d) is the phase fraction likelihood for a different standard deviation, with (c) in the center row. (e) shows (d) after multiplying each row by the likelihood from (b), forming the full 2D likelihood function of the bulk material's phase fraction. To obtain the resulting material's phase fraction likelihood, (f) shows the integration of (e) by its rows, applying the law of total probability as discussed in Section \ref{sec:predicting the confidence level}. (f) also shows (c) for comparison between the $95\%$ confidence intervals.}
    \label{fig:confidence bounds}
\end{figure}

In this section, we show how the model accounts for the uncertainty in the prediction of the phase fraction standard deviation, shown in Figure \ref{fig:model error}. After predicting the standard deviation $\tilde{\sigma}$ for a given image, the model also estimates the variability of this prediction, denoted $\sigma_\text{mod}$, using the data-driven function shown in Figure \ref{fig:model error}(c). These two quantities define a normal distribution for the predicted standard deviation. Specifically, we model the likelihood of the true standard deviation using a probability density function $$f_{\sigma}\sim \mathcal{N}(\tilde{\sigma}, \sigma_\text{mod}\cdot\tilde{\sigma})$$ as illustrated in Figure \ref{fig:confidence bounds}(b).

Let $B=[\Phi(\omega)-c,\Phi(\omega)+c]$ be the interval within which we want to estimate the probability of finding the material's true phase fraction. Then, by the law of total probability,

\begin{equation}\label{eq:prediction of confidence bounds}
    P(\phi\in B)=\int_{-\infty}^{\infty}P(\phi\in B\ |\ \sigma=x)\cdot f_\sigma(x)dx
\end{equation}

 Equation (\ref{eq:prediction of confidence bounds}) is the equation ultimately used for the prediction of the representativity confidence bounds of ImageRep. For example, for a $95\%$ confidence level, the value of $c$ is found in $B=[\Phi(\omega)-c,\Phi(\omega)+c]$ such that $P(\phi\in B)=0.95$. In practice, the limits of integration are  truncated to the range $\tilde{\sigma}\pm\tilde{\sigma}$, which captures more than $99.99\%$ of the probability mass of  $f_\sigma$ for most image sizes (since $\tilde{\sigma}>4\cdot\sigma_\text{mod}\cdot\tilde{\sigma}$).  A visual example of how Equation (\ref{eq:prediction of confidence bounds}) is built can be seen in Figure \ref{fig:confidence bounds}. 

% The integral across different standard deviations can be seen by integrating across the rows in plot (e), each row already normalised by $f_\sigma(x)$ for different standard deviations shown in plot (b). The rows before normalisation and the center row can be seen in plots (d) and (c), respectively. The result of integration over the rows of plot (e) can be seen in the blue PDF line of plot (f), where the confidence bounds can be easily found. 

\subsubsection{Predicting image size needed for a specified confidence interval}

ImageRep can predict the image size needed for a user specified confidence interval, i.e. what image size is needed for a particular deviation from the bulk material phase fraction, outlined in Algorithm \ref{alg:image size prediction}.

Following the notation in Algorithm \ref{alg:image size prediction}, given a confidence level $c$ and a required deviation level $d$, we apply a simple optimization algorithm to find the image size needed such that $P(\phi\in B)=c$ for $B=[(1-d)\cdot\Phi(\omega), (1+d)\cdot\Phi(\omega)]$ using Equation (\ref{eq:prediction of confidence bounds}) (remember $f_\sigma$ changes according to the image size).

\section{Validation Results}\label{sec: Validation}

\begin{figure}[h]
    \centering
    \includegraphics[width=1\linewidth]{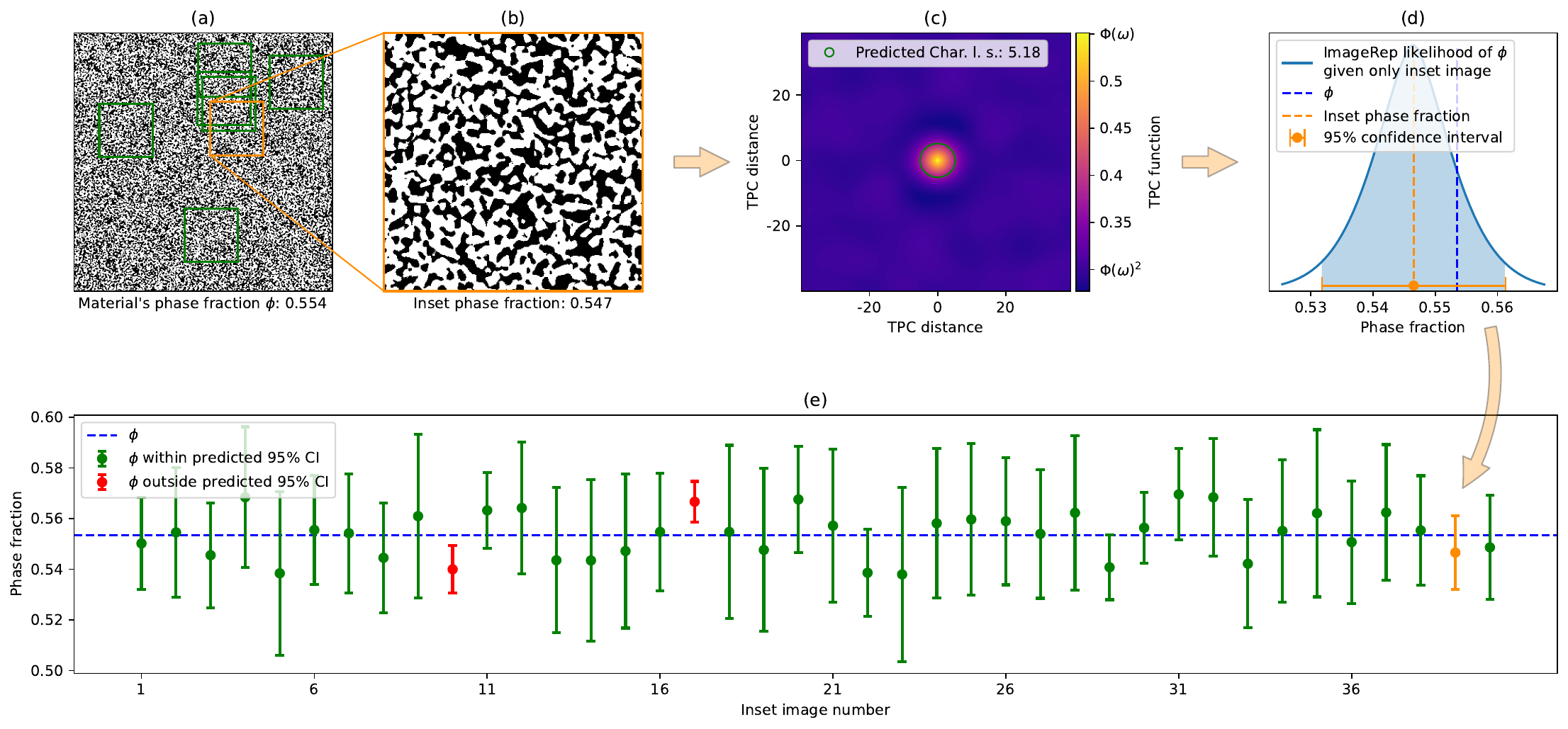}
    \caption{
    Overview of the ImageRep method validation process: For a given confidence level such as 95\%, the method needs to produce correct results 95\% of the time. Specifically, for the example of the validation of the SOFC anode material \cite{hsu2018mesoscale} shown in (a), a random image sample (b) is taken. The proposed method calculates the TPC function, and the corresponding characteristic length scale (c) and arrives at the predicted confidence interval (d) through the statistical framework presented in Section \ref{sec:model theory}. The bulk material's phase fraction $\phi$ \protect\footref{footnote:uncertainty in sofc large image}, should fall within the predicted confidence interval in $95\%$ of cases. (e) presents $40$ different confidence intervals for $40$ different random image samples taken from (a), showing that in $38/40=95\%$ of the cases, the material's phase fraction is within these intervals. the corresponding interval for image (b) is presented in inset image number 39.
    }
    
    \label{fig:model accuracy}
\end{figure}

\begin{table}[h]
\centering
\caption{Validation results of our method and the classical subdivision method, applied to the same images taken from the different materials.}
\label{table: validation}
\begin{tabular}{|ccc|}
\hline
\multicolumn{1}{|c|}{Method}             & \multicolumn{1}{c|}{\begin{tabular}[c]{@{}c@{}}Material's true phase fraction \\ is inside the predicted bounds\end{tabular}} & \begin{tabular}[c]{@{}c@{}}Absolute error from 95\% \\ confidence goal\end{tabular} \\ \hline
\multicolumn{3}{|c|}{PoreSpy materials}                                                                                                                                                                                                                             \\ \hline
\multicolumn{1}{|c|}{Subdivision method (2D)} & \multicolumn{1}{c|}{9238/10000 = 92.38\%}                                                                                     & 2.62\%                                                                              \\ \hline
\multicolumn{1}{|c|}{\textbf{ImageRep (2D)}}  & \multicolumn{1}{c|}{9442/10000 = 94.42\%}                                                                                     & \textbf{0.58\%}                                                                     \\ \hline
\multicolumn{1}{|c|}{Subdivision method (3D)} & \multicolumn{1}{c|}{914/1000 = 91.40\%}                                                                                       & 3.60\%                                                                              \\ \hline
\multicolumn{1}{|c|}{\textbf{ImageRep (3D)}}  & \multicolumn{1}{c|}{941/1000 = 94.10\%}                                                                                       & \textbf{0.90\%}                                                                     \\ \hline
\multicolumn{3}{|c|}{Solid Oxide Fuel Cell anode}                                                                                                                                                                                                                   \\ \hline
\multicolumn{1}{|c|}{Subdivision method (2D)} & \multicolumn{1}{c|}{967/1008 = 95.93\%}                                                                                       & 0.93\%                                                                              \\ \hline
\multicolumn{1}{|c|}{\textbf{ImageRep (2D)}}  & \multicolumn{1}{c|}{956/1008 = 94.84\%}                                                                                       & \textbf{0.16\%}                                                                     \\ \hline
\multicolumn{3}{|c|}{Targray PE16A battery separator}                                                                                                                                                                                                               \\ \hline
\multicolumn{1}{|c|}{Subdivision method (3D)} & \multicolumn{1}{c|}{142/200 = 71\%}                                                                                           & 24.00\%                                                                             \\ \hline
\multicolumn{1}{|c|}{\textbf{ImageRep (3D)}}  & \multicolumn{1}{c|}{191/200 = 95.5\%}                                                                                         & \textbf{0.50\%}                                                                     \\ \hline
\multicolumn{3}{|c|}{Celgard PP165 battery separator}                                                                                                                                                                                                               \\ \hline
\multicolumn{1}{|c|}{Subdivision method (3D)} & \multicolumn{1}{c|}{200/200 = 100\%}                                                                                          & 5.00\%                                                                              \\ \hline
\multicolumn{1}{|c|}{\textbf{ImageRep (3D)}}  & \multicolumn{1}{c|}{192/200 = 96\%}                                                                                           & \textbf{1.00\%}                                                                     \\ \hline
\multicolumn{3}{|c|}{All materials}                                                                                                                                                                                                                                 \\ \hline
\multicolumn{1}{|c|}{Subdivision method (2D)} & \multicolumn{1}{c|}{10205/11008 = 92.71\%}                                                                                    & 2.29\%                                                                              \\ \hline
\multicolumn{1}{|c|}{\textbf{ImageRep (2D)}}  & \multicolumn{1}{c|}{10398/11008 = 94.46\%}                                                                                    & \textbf{0.54\%}                                                                     \\ \hline
\multicolumn{1}{|c|}{Subdivision method (3D)} & \multicolumn{1}{c|}{1256/1400 = 89.71\%}                                                                                      & 5.29\%                                                                              \\ \hline
\multicolumn{1}{|c|}{\textbf{ImageRep (3D)}}  & \multicolumn{1}{c|}{1324/1400 = 94.57\%}                                                                                      & \textbf{0.43\%}                                                                     \\ \hline
\end{tabular}
\end{table}

Large samples of materials are required for validation to ensure that their phase fractions accurately represent the phase fraction of the entire material. This is not a concern with generated simulated materials, since there is no upper bound on the size of generated samples, except for computer resources. Therefore, for validation we used 50 different simulated materials and 3 real experimental large open-source microstructures of real materials, whose deviation from their true material phase fraction is predicted to be less than 1\%. 

The simulated materials were generated by PoreSpy \cite{gostick2019porespy}, a python library capable of generating a wide variety of microstructures. The name of the PoreSpy generators used are presented in the Supplementary Information. The three real materials are a solid oxide fuel cell (SOFC) anode \cite{hsu2018mesoscale} and two battery separators: a Targray PE16A separator and a Celgard PP1615 separator \cite{lagadec2016communication}. For the SOFC anode, we used the inpainting tool developed by Squires et al. \cite{squires2023artefact} to account for a small void defect, that is presented in the Supplementary Information.

All the validation data is `unseen' to the model, in the sense that it has not been used for the variability quantification step outlined in Figure \ref{fig:model error}. We put our confidence claim to the test - for a given confidence level of $95\%$, we claim that with $95\%$ confidence, the material's true phase fraction is within the model's predicted bounds. Therefore, if we test whether the material's true phase fraction is in the predicted bounds, ideally it should be in the bounds $95\%$ of the time. This workflow of the validation process is presented in Figure \ref{fig:model accuracy}.

Table \ref{table: validation} present the results of all 11008 material's samples used for validation, with the process explained in Figure \ref{fig:model accuracy}, with information broken down to the different materials used for validation. Our method and the classical subdivision method were applied on the same image samples taken for validation. 

The classical subdivision method is, to the best of our knowledge, the only representativity analysis approach currently used in practice. It is presented and applied in \cite{kanit2003determination}, and the details of our implementation are provided in the Supplementary Information. A clear visualization of the subdivision method applied to a single image is available in \cite{kanit2006apparent}. This method does not incorporate any information from the two-point correlation function. For instance, it does not utilize the uncorrelated distance estimate $r_0$, which could offer valuable insight and potentially enhance the basic approach.

The Supplementary Information also presents the results of our method without the final correction step introduced in Section \ref{subsec:Total error estimation}, which incorporates a data-driven variance prediction of the variance. These results highlight the importance of this step in achieving more accurate predictions.

% \section{Rule-of-Thumb for Representativity}

% \begin{itemize}
%     \item Add a rule of thumb instruction - if particle size is about x with gaps about y - approximately how much particles you need to capture in a line in the image. $\longleftarrow$ good for experimentalists to know roughly how much they need to image without computing 2pc and everything on a small image.
% \end{itemize}

\section{Discussion}\label{sec: Discussion}

The presented method, ImageRep, is a novel approach to the prediction of the representativity of phase fractions from a single micrograph or microstructure, providing a valuable tool for the advancement of the field of materials characterization. By leveraging the Two-Point Correlation function, our model directly estimates the phase fraction variance from a single image, thereby enabling robust confidence level predictions. This approach contrasts sharply with the traditional method that relies on large datasets to estimate representativity.

In the section that first introduced our definition of representativity, we reference the example of an SEM image of metal dewetting on the surface of ceramic substrate \cite{song2019unveiling}. After segmenting an image from this study using \cite{docherty2023samba}, we uploaded the result to imagerep.io and it gives us the following exact sentence ``The phase fraction in the segmented image is 0.405. Assuming perfect segmentation, the `ImageRep' model proposed by Dahari \textit{et al.} suggests that we can be 95.0\% confident that the material's phase fraction is within 4.6\% of this value (i.e. $0.405\pm0.019$).'' We provide this wording to ensure that the user quotes the results of the analysis appropriately. 

\subsection{Case studies: Impact of phase fraction uncertainty on battery performance}

Usseglio-Viretta \textit{et al.} \cite{usseglio2018resolving} shared a dataset of XCT images from a lithium-ion battery cathode material. The dataset is highly cited in the battery community partly due to the scarcity of alternative open data. When we applied our analysis to one of the published NMC cathode ($253^3$ voxels at $(398\ \text{nm})^3$ voxel size), we found that the $95\%$ confidence interval of the pore phase fraction was $0.365 \pm 0.015$. We imported the two bounding phase fraction values (i.e. 0.350 and 0.380) into the PyBaMM battery simulation software \cite{sulzer2021python}, with the default NMC cathode parameters that originate from \cite{chen2020development}, in order to investigate the impact of this uncertainty on device level performance. 

At low powers, the predicted capacity of these two cells was proportional to their respective solid phase fraction; however, at higher powers, the cells begin to diverge as the cell with a lower porosity becomes transport limited. So, while the less porous cell had $c.$ 11\% higher capacity at low power, it has $c.$ 9\% lower capacity at higher power. This swing in observed capacity is comparable to the difference in design between an “energy cell” and a “power cell”, suggesting that significantly larger images would be required in order to draw meaningful conclusions about the relationships between microstructure and performance in battery electrodes. The simulated discharge curves are presented in the Supplementary Information.

The analysis by Lu \textit{et al.} \cite{lu20203d} of the impact of microstructure on lithium-ion battery performance is batteries is one of
the most highly cited papers on microstructural imaging. The XCT dataset was $341\times342\times379$ voxels, with 126 nm voxel  size. However, it can be seen from Figure 3 of their article that the volumes used for simulation are only a few particles across. Analysis of their data (kindly provided by the authors) using ImageRep suggests that we can be 95\% confident that the NMC's phase fraction is within 10.4\% of this value (i.e. $0.406\pm0.042$). Furthermore, if they had wanted to be within 1\% of the true value they would have needed to image a volume \textit{c.} 120 times larger. This gives an insight into the significant representativity challenges faced by the battery community.

\subsection{Assessing image unrepresentativeness}

Instead of predicting the proximity of the image phase fraction to the material phase fraction, the model can also be used to predict how far it is from the material phase fraction, to predict how un-representative a sample is. 

This can be performed by observing the tails of the likelihood distribution, rather than focusing on the center of the likelihood distribution, as shown in Figure \ref{fig:model accuracy}. For example, the predicted phase fraction of graphite in micrograph 368 from DoITPoMS of Blackheart cast iron \cite{barber2007doitpoms} (after MicroLib \cite{kench2022microlib} segmentation), is within 18.4\% of the micrograph graphite phase fraction, with 95\% confidence ($0.261\pm0.048$). Conversely, instead of focusing on the bulk $95\%$ of the distribution, by examining the lower  and upper quartiles, we can predict that the bulk graphite phase fraction deviates by more than 6\% from the image phase fraction, with 50\% confidence. 

In other words, there is a 50\% probability (a coin toss) that the actual material phase fraction is lower than $0.245$ or higher than $0.277$, indicating that the image may be far from being representative. This approach provides a measure of how unreliable or unrepresentative a sample is in relation to the broader material's characteristics. The image and the analysis are presented in the Supplementary Information.

\subsection{Implications for Material Science}

Our method addresses a critical gap in the current literature: the lack of confidence analysis for metrics extracted from micrographs. Traditional approaches often necessitate substantial time and financial investments to collect numerous data repeats, posing a barrier to efficient material characterization. By reducing the data requirements, ImageRep offers a practical tool for material scientists and engineers, particularly when working with limited microstructural data.

The ability to predict phase fraction variation from a single image has several practical implications:
\begin{enumerate}
    \item \textbf{Enhanced Efficiency}: Researchers can achieve significant time and cost savings by eliminating the need for large datasets to determine confidence intervals.
    \item \textbf{Improved Accuracy}: Our method allows for materials manufacturers to have more precise control over phase fractions, which is crucial for optimizing materials for specific applications. Small changes in phase fraction can significantly impact material properties and performance, making accurate representativity predictions essential.
    \item \textbf{Wide Applicability}: The model's validation using diverse microstructures from open-source datasets demonstrates its efficacy across various material types, making it a versatile tool in the field.
\end{enumerate}

Additionally, the classical ``subdivision" method of single image variation prediction tends to under-predict the variance. We believe this is due to the poor fit of small subsamples to Equation (\ref{eq:variation as a finction of cls}), and the low variability of few large subsamples. Our TPC-based approach does not rely on Equation (\ref{eq:variation as a finction of cls}) directly, and accumulates more information from the image by observing all two-point correlations which captures the features variability. This seems especially important with anisotropic materials, as can be seen from the validation results presented in Table \ref{table: validation}, where the biggest gap in performance is observed in the 3D experimental separator materials which are anisotropic \cite{lagadec2016communication}.

The validation of our model using the MicroLib generators and testing using the PoreSpy generators and the experimental datasets confirms its robustness and accuracy. The prediction of the phase fraction standard deviation closely matches the true standard deviation. The model's error quantification, accounting for prediction variability, ensures that the confidence intervals are reliable.

Our method's accuracy is further validated through a rigorous test: predicting whether the true material's phase fraction falls within the model's confidence bounds for 95\% of the cases. The results show that the confidence bounds generated by our model are accurate close to 95\% of the time, underscoring the model's reliability. Its worth noting that the model was not validated for other specific confidence levels, but we see no reason to expect the results to differ. 

\subsection{Limitations and Future Work}

Despite its advantages, the use of the MicroLib database for quantifying the variance of the model output, which is the variance of the predicted phase fraction, has inherent limitations. The MicroLib dataset, while diverse, will not encompass all possible microstructural variations encountered in practice. Future work could expand the dataset to include more diverse microstructures, further enhancing the model's generalisability.

While we use the macro-homogeneity assumption  \cite{hill1984macroscopic, costanzo2005definitions} for our analysis by selecting $r_0$ as the threshold distance beyond which spatial correlations are assumed to vanish, we acknowledge that this introduces a simplifying assumption. This hard cutoff is necessary for the theoretical validity of our formulation, but in reality, correlations in random media often decay gradually rather than disappearing abruptly. More generally, such behavior is described by concepts like ergodicity \cite{shachar2023first} and mixing \cite{dentz2023mixing}, which characterize the weakening of statistical dependence with distance. Incorporating these frameworks, or relaxing the strict independence assumption, could offer a path toward improving the model's performance and further refining the theoretical connection between phase fraction variance and the two-point correlation function. In particular, systems with long-range order or suppressed fluctuations, such as hyperuniform materials, may require special consideration.

There are theoretical microstructures with infinite integral ranges, such as infinite fibrils \cite{dirrenberger2014towards}, that are beyond the scope of our current model. Addressing these cases, and specifically breaking the CLS into different angle orientations, would further strengthen the robustness of ImageRep in future work.

In addition, we would like to explore the application of a similar workflow to other material's metrics such as interfacial area and triple-phase boundary length, as well as transport phenomena such as the tortuosity factor. We would also like to explore how this method can be expanded to find the phase fraction representativity of n-phase materials for $n>2$ without treating the image as binary for each phase independently, which loses the interdependence of all phases.

It is also important to note that the uncertainty in representativity calculated using our approach assumes that the image has been perfectly segmented. However, in reality, it may be the case that the error of phase fraction due to poor segmentation is greater than the error due to low representativity in some scenarios. Additionally, uncertainties introduced by the imaging process itself, such as noise, contrast limitations, or resolution constraints, can impact the quality of the raw grayscale image, ultimately affecting the accuracy of the binary segmentation. These factors may introduce structural artifacts or misclassifications that propagate into the representativity prediction. A promising direction for future work would be to develop an end-to-end representativity model that integrates the effects of imaging, segmentation, and spatial analysis in a unified framework, enabling a more comprehensive and realistic assessment of uncertainty.

Finally, a general rule-of-thumb for representativity in terms of estimated feature size (i.e, average particle radius) would be highly desirable. This would allow practitioners to roughly decide on the image size needed whilst they are imaging, before performing the full scan, segmenting the resulting image and running ImageRep to find how representative their sample is. 

%- why does the periodic TPC more stable then non-periodic.

\section{Website}

To make the our model accessible to users across backgrounds (experimentalists, microscopists, theoreticians) and domains, we developed a simple webapp, \url{www.imagerep.io}. This app allows users to drag and drop multi-phase segmented microstructures in \texttt{.png}, \texttt{.jpg} or \texttt{.tiff} formats, select the desired phase and desired representativity confidence level. The confidence bounds are then returned, together with an interactive scale bar, which the users can change and observe how the confidence level affects the representativity estimation. The users can also receive the amount of additional measurements needed to reach smaller desired confidence bounds. We hope this ease of access and ease of use encourages more widespread reporting of metric representativity.

\section{Code availability}

The codes used in this manuscript and for creating the website are available at \url{https://github.com/tldr-group/ImageRep}

\section*{Acknowledgements}

This work was supported by funding from the Imperial President's PhD Scholarships received by A.D., funding from the EPSRC Faraday
Institution Multi-Scale Modelling project (https://faraday.ac.uk/;
EP/S003053/1, grant number FIRG003) received by S.K., and the EPRSC and SFI Centre for Doctoral Training in Advanced Characterisation of Materials (EP/S023259/1) received by R.D.. We also thank Isaac Squires for his initial research on this subject, Simon Daubner for his valuable comments as well as the rest of the tldr group.

\section*{Author contributions}

A.D., S.K. and S.J.C. conceptualised the project. A.D. designed and developed the code for ImageRep, built the mathematical model, performed the statistical analysis and drafted the manuscript. R.D. built the webapp. R.D. and S.K. revised parts of the code for ImageRep. R.D., S.K. and S.J.C. made substantial revisions and edits to all sections of the draft manuscript.

\section*{Competing interests}
The authors declare no competing interests.

\bibliographystyle{abbrv}
\bibliography{main.bib}
\end{document}